\newlength\figureheight 
\newlength\figurewidth 
\pgfplotsset{plot coordinates/math parser=false}
\newtheorem{specialcasecounter}{Theorem}
\newtheoremstyle{specialcasestyle}{1mm}{1mm}{\upshape}{}{\bfseries\upshape}{.}{0mm}{}
\theoremstyle{specialcasestyle}
\newtheorem{rem}{Remark}
\begin{document}

\title{An Improved Hazard Rate Twisting Approach for the Statistic of the Sum of Subexponential Variates (Extended Version)}

\author{Nadhir~Ben Rached, Abla Kammoun, Mohamed-Slim~Alouini, and Raul Tempone
\\\thanks { The authors are with CEMSE Division, King Abdullah University of Science and Technology (KAUST), Thuwal, Makkah Province, Saudi Arabia. E-mails: \{nadhir.benrached, abla.kammoun, slim.alouini, raul.tempone\}@kaust.edu.sa.

The authors are members of the KAUST Strategic Research Initiative on Uncertainty Quantification in Science and Engineering (SRI-UQ)}\vspace{-0.8cm}
%\small Computer, Electrical and Mathematical Science and Engineering (CEMSE) Division\\
%\small King Abdullah University of Science and Technology (KAUST)\\
%\small Thuwal, Makkah Province, Saudi Arabia \\
%\small \{nadhir.benrached, abla.kammoun, slim.alouini, raul.tempone\}@kaust.edu.sa
}
\date{}
\maketitle
\thispagestyle{empty}
\begin{abstract}
In this letter, we present an improved hazard rate twisting technique for the estimation of the probability that a sum of independent but not necessarily identically distributed subexponential Random Variables (RVs) exceeds a given threshold. Instead of twisting all the components in the summation, we propose to twist only the RVs which have the biggest impact on the right-tail of the sum distribution and keep the other RVs unchanged. A minmax approach is performed to determine the optimal twisting parameter which leads to an asymptotic optimality criterion. Moreover, we show through some selected simulation results that our proposed approach results in a variance reduction compared to the technique where all the components are twisted.
\end{abstract}

\begin{IEEEkeywords}
Hazard rate twisting, subexponential, minmax approach, twisting parameter, asymptotic optimality, variance reduction.
\end{IEEEkeywords}
\vspace{-3mm}
\section{Introduction}
%{\ Estimating the probability that a sum of Random Variables (RVs) exceeds a given threshold occurs in many applications in digital communication systems. For instance, in cellular mobile communications, it is of primordial interest to investigate the probability that the interferer's powers, modeled as a sum of RVs, exceeds a given threshold \cite{Stuber:2001:PMC:368633}. Closed-form expressions for many well-known challenging sum distributions are generally intractable and unknown. 
{\ The problem of analyzing the statistic of the sum of RVs is often encountered in many applications of wireless communication systems. The most relevant example is that of estimating the probability that the total interference power \cite{930627}, often modeled  as a sum of RVs, exceeds a certain threshold. Addressing this issue can essentially serve to shed the light on the behavior of the outage probability of the signal-to-interference-plus-noise (SINR) ratio, which is among the most important performance metrics in practice. This question is receiving an increasing interest,  mostly spurred by the emergence of cognitive radio systems, in which the control of the interference is of paramount importance \cite{6364048}. However, closed-form expressions for many well-known challenging sum distributions are generally intractable and unknown, which makes this problem far from being trivial. In particular, we are interested in estimating the sum distribution of RVs with subexponential decay which includes for example the Log-normal and the Weibull (with shape parameter less than $1$) RVs. The sum distribution of these two RVs has received a lot of interests \cite{1275712,1369233,4814351,4623761,5425708,5282371,1665128,alouini1,1388722}. A crude Monte Carlo (MC) simulation is of course the standard technique to estimate the probability of interest. However, it is widely known that this technique becomes computationally expensive when  rare events are considered, i.e. events with very small probabilities.\par }
{\ Importance Sampling (IS) is a well-known variance reduction technique which aims to improve the computational efficiency of naive MC simulations \cite{opac-b1100693}. The idea behind this technique is to consider a suitable change of the underlying sampling distribution in a way to achieve a substantial variance reduction of the IS estimator. Many research efforts were carried out to propose efficient IS simulation approaches. For instance, an interesting hazard rate twisting technique was derived that deals with the sum of independent and identically distributed (i.i.d) subexponential RVs \cite{Juneja:2002:SHT:566392.566394}. In \cite{DBLP:journals/corr/RachedBKAT14}, this technique was further extended to handle the sum of independent and not necessarily identically distributed subexponential RVs. The idea behind this technique was to twist the hazard rate of each component in the summation by a twisting parameter $\theta$ between $0$ and $1$. In this letter, we propose an improved version of the work in \cite{DBLP:journals/corr/RachedBKAT14} by twisting only the heaviest components which have the biggest impact on the right-tail of the sum distribution. \par }
%{\ In the present work, we propose an improved version of the work in \cite{DBLP:journals/corr/RachedBKAT14} by twisting only the heaviest components which have the biggest impact on the right-tail of the sum distribution. The rest of the letter is organized as follows. In section II, we briefly review the general concept behind IS and present in particular the conventional hazard rate twisting technique \cite{DBLP:journals/corr/RachedBKAT14}. In section III, the newly optimized version is developed. Moreover, a minmax approach is described for the determination of the twisting parameter and an asymptotic optimality result is stated and proved. In section IV, some selected simulation results are shown to illustrate the computational gain achieved by our proposed IS technique. Finally, the main results are summarized in section V.
%\par }
\vspace{-3mm}
\section{Importance Sampling}
{\ Let us consider a sequence $X_1,X_2,...,X_N$ of independent and not necessarily identically distributed positive RVs. We denote the Probability Density Function (PDF) of $X_i$ by $f_i(\cdot)$, $i=1,2,...,N$. Our goal is to efficiently estimate: 
\begin{align}
\alpha=P \left ( S_N=\sum_{i=1}^{N}{X_i} >\gamma_{th}\right ),
\end{align}
for a sufficiently large threshold $\gamma_{th}$. We focus on the case where the RVs $X_i$, $i=1,2...,N$, belong to the class of heavy-tailed distributions, i.e distributions which decay slower than any exponential distribution. In particular, we are interested in a subclass called subexponential distributions which contains some of the most commonly used heavy-tailed distributions such as the Log-normal distribution and the Weibull distribution with shape parameter less than $1$.  
\par }
{\ Obviously, a naive MC simulation technique can be used to estimate $\alpha$. In the framework of small threshold values, this naive MC simulation is actually efficient and no further computational improvement is worthy to try. However, it is well known that this approach is computationally expensive when we consider the estimation of rare events, i.e. events with very small probabilities. IS is an alternative approach which can improve the computational efficiency of naive MC simulations \cite{opac-b1100693}. The idea behind this variance reduction technique is to construct an unbiased estimator $\hat \alpha_{IS}$ of $\alpha$ with much smaller variance than the naive MC estimator. In fact, the IS approach is based on performing a change of the sampling distribution as follows:
\begin{align}
\nonumber \alpha &=\mathbb{E}\left [ \textbf{1}_{(S_N>\gamma_{th})}\right ]=\int_{\mathbb{R}^N}{\textbf{1}_{(S_N>\gamma_{th})}\prod_{i=1}^{N}{f_i(x_i)} dx_1dx_2...dx_N}\\
\nonumber  &= \int_{\mathbb{R}^N}{\textbf{1}_{(S_N>\gamma_{th})} L(x_1,x_2,...,x_N)\prod_{i=1}^{N}{g_i(x_i)}dx_1dx_2...dx_N}\\
&=\mathbb{E}_{p^*} \left [ \textbf{1}_{(S_N>\gamma_{th})} L(X_1,X_2,...,X_N)\right ],
\end{align}
where the expectation $\mathbb{E}_{p^*}\left [ \cdot \right ]$ is taken with respect to the new probability measure $p^*$ under which the PDF of each $X_i$ is $g_i(\cdot)$, $i=1,2,...,N$. The likelihood ratio $L$ is defined as:
\begin{align}\label{like}
L(X_1,X_2,...,X_N)=\prod_{i=1}^{N}{\frac{f_i(X_i)}{g_i(X_i)}}.
\end{align}
The IS estimator is then given by: 
\begin{align}
\hat \alpha_{IS}=\frac{1}{M}\sum_{i=1}^{M}{\textbf{1}_{(S_N(\omega_i)>\gamma_{th})}L(X_1(\omega_i),...,X_N(\omega_i))},
\end{align}
where $M$ is the number of simulation runs and $\textbf{1}_{(\cdot)}$ is the indicator function. The fundamental issue in importance sampling lies in the choice of the new sampling distribution $g_i(\cdot)$, $i=1,2,...,N$, that results in  substantial computational gains. In fact, the new sampling distribution should emphasize the generation of values that have more impact on the desired probability $\alpha$ (in our setting, important samples are the ones which satisfy $S_N>\gamma_{th}$). Thus, by sampling these important values frequently, the estimator's variance can be reduced. 
\par }
{\ The asymptotic optimality is an interesting criterion that can be used to quantify the pertinence of the probability measure change \cite{Juneja:2002:SHT:566392.566394}. This criterion is often achieved through a clever choice of the sampling distribution. Let us define the sequence of RVs $T_{\gamma_{th}}$ as:
\begin{align}\label{Tgamma}
T_{\gamma_{th}}=\textbf{1}_{(S_N>\gamma_{th})}L(X_1,X_2,...,X_N).
\end{align}
From the non-negativity of the variance of $T_{\gamma_{th}}$, it follows that:
\begin{align}\label{res1}
\frac{\log \left (\mathbb{E}_{p^*}\left [ T_{\gamma_{th}}^2 \right ]\right )}{\log\left (\alpha \right )} \leq 2,
\end{align}
for any probability measure $p^*$. We say that the asymptotic optimality criterion is achieved if the previous inequality holds with equality as $\gamma_{th} \rightarrow +\infty$, that is:
\begin{align}\label{asymp_opt}
\lim_{\gamma_{th} \rightarrow +\infty}{\frac{\log \left (\mathbb{E}_{p^*}\left [ T_{\gamma_{th}}^2 \right ]\right )}{\log \left (\alpha \right )} }=2.
\end{align}
For instance, the naive MC simulation is not asymptotically optimal since the limit in (\ref{asymp_opt}) is equal to $1$. A lot of research efforts were carried out to propose interesting changes of the sampling distribution. The exponential twisting technique, derived from the large deviation theory, is the most used technique in the setting where the underlying distribution is light-tailed \cite{54903}. In the heavy-tailed case, the exponential twisting technique is no more feasible and an alternative approach is needed. In \cite{Juneja:2002:SHT:566392.566394}, an interesting hazard rate twisting technique (we call it conventional hazard rate-based IS technique in the rest of this work) was derived to deal with heavy-tailed distributions. In that work, an i.i.d sum of distributions with subexponential decay were considered and the asymptotic optimality of the hazard rate twisting approach was verified. In \cite{DBLP:journals/corr/RachedBKAT14}, an extension of \cite{Juneja:2002:SHT:566392.566394} to the case of independent and not necessarily identically distributed sum of subexponential variates is developed and the asymptotic optimality criterion was again shown to be satisfied. Let us define the hazard rate of $X_i$, $i=1,2,...,N$, as
\begin{align}
\lambda_i(x)=\frac{f_i(x)}{1-F_i(x)}, \text{ }x>0,
\end{align}
where $F_i(\cdot)$ is the cumulative distribution function of $X_i$, $i=1,2,...,N$. We also define  the hazard function as
\begin{align}
\nonumber \Lambda_i(x)&=\int_{0}^{x}{\lambda_i(t)dt}=-\log(1-F_i(x)), \text{ }x>0.
\end{align}
The PDF of $X_i$, $i=1,2,...,N$, is related to the hazard rate and the hazard function as follows
\begin{align}\label{under}
f_i(x)=\lambda_i \left ( x \right )\exp \left ( -\Lambda_i(x)\right ), \text{ }x>0.
\end{align}
The conventional hazard rate twisting technique \cite{DBLP:journals/corr/RachedBKAT14} considers a new sampling distribution which is obtained by twisting the hazard rate of the underlying distribution (\ref{under}) of each component in the summation $S_N$ by the same quantity $0 \leq\theta<1$
\begin{align}
\label{eq:twist}
g_i(x)\triangleq f_{i,\theta}(x)=(1-\theta)\lambda_i(x)\exp \left (-(1-\theta)\Lambda_i(x) \right ).
\end{align} 
\par }
\section{Improved Approach}
\subsection{Proposed Approach}
{\ Similar to \cite{DBLP:journals/corr/RachedBKAT14}, we consider a sequence $X_1,X_2,...,X_N$ of independent and not necessarily identically distributed subexponential positive RVs, belonging to the same family of distribution, i.e., for example a sum of independent Log-normal variates with different means and variances. From this sequence, we extract a sub-sequence containing the RVs with the heaviest right-tail which are naturally i.i.d RVs. Let us denote by $s$ the number of RVs contained in this sub-sequence. It is important to note that the particular i.i.d case occurs when $s=N$. For instance, for the particular Weibull distributions with scale parameters $\beta_i$ and shape parameters $k_i$, $i=1,2,...,N$, the number $s$ is defined as:
\begin{align}
s&=\#\{i\in \{1,2,...,N\} \text{ such that } k_i=k_{min} \text{ and }\beta_i=\beta_{max} \},
\end{align}
where $\#$ denotes the cardinality of the set, $k_{min}=\min_{i}{k_i}$ and $\beta_{max}=\max_{i;k_i=k_{min}}{\beta_i}$. For Log-normal RVs with mean $\mu_i$ and standard deviation $\sigma_i$, $i=1,2,...,N$, the number $s$ is 
\begin{align}
s&=\#\{i\in \{1,2,...,N\} \text{ such that } \sigma_i=\sigma_{max} \text{ and }\mu_i=\mu_{max} \},
\end{align}
where $\sigma_{max}=\max_{i}{\sigma_i}$, and $\mu_{max}=\max_{i;\sigma=\sigma_{max}}{\mu_i}$. The adopted methodology in the present work is to twist only these $s$ heaviest i.i.d RVs and keep the others untwisted. The intuition behind this methodology is that the remaining untwisted RVs have a negligible effect on the right-tail of the sum distribution. Moreover, by only twisting the dominating RVs, the new sum distribution is less heavier than the one obtained by the conventional approach (since the hazard rate twisting of a RV results in a more heavier distribution). Furthermore, our improved technique remains able to generate important samples, i.e realizations that exceed the given threshold. Therefore, one could expect a variance reduction using our improved approach. In order to validate  our expectation, we represent in Fig. \ref{fig1} the second moment of the RV $T_{\gamma_{th}}$, given by the conventional and the improved IS approaches, as function of the twisting parameter $\theta$ for the sum of four independent but not identically distributed Log-normal RVs and for a fixed threshold $\gamma_{th}$.
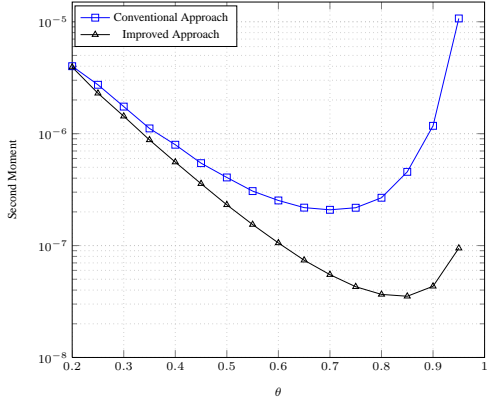
\begin{figure}[h]
\centering
\scalefont{0.7}
\begin{tikzpicture}[scale=0.65]
\begin{semilogyaxis}[%
scale only axis,
xmin=0.2, xmax=1,
xlabel={$\theta$},
xmajorgrids,
ymin=1e-08, ymax=0.000015,
yminorticks=true,
ymajorgrids,
yminorgrids,
grid style={dotted},
ylabel={Second Moment},
legend style={at={(0.4,0.990132389187107)},draw=black,fill=white,align=left}]
\addplot [
color=blue,
solid,
mark=square,
mark options={solid}
]
coordinates{
 (0.2,4.00634477079729e-06)(0.25,2.73367095817808e-06)(0.3,1.744904803475e-06)(0.35,1.11350404145892e-06)(0.4,7.96200453494425e-07)(0.45,5.44881423679326e-07)(0.5,4.06050730679034e-07)(0.55,3.06682140054296e-07)(0.6,2.52904362741175e-07)(0.65,2.18250220097286e-07)(0.7,2.08703479000034e-07)(0.75,2.17744487149853e-07)(0.8,2.67390005646349e-07)(0.85,4.55738409657214e-07)(0.9,1.1744505045028e-06)(0.95,1.07019373364102e-05) 
};
\addlegendentry{Conventional Approach};

\addplot [
color=black,
solid,
mark=triangle,
mark options={solid}
]
coordinates{
 (0.2,3.93065269030992e-06)(0.25,2.2960675299679e-06)(0.3,1.43162351180022e-06)(0.35,8.79204734079206e-07)(0.4,5.57076245644658e-07)(0.45,3.5707586824853e-07)(0.5,2.3120224916498e-07)(0.55,1.54134159567812e-07)(0.6,1.05524661377818e-07)(0.65,7.39591084949413e-08)(0.7,5.49636205211449e-08)(0.75,4.28663169779973e-08)(0.8,3.66108875131944e-08)(0.85,3.52413839666486e-08)(0.9,4.32706471196361e-08)(0.95,9.47350162754793e-08) 
};
\addlegendentry{Improved Approach};

\end{semilogyaxis}
\end{tikzpicture}%
\vspace{-3mm}
\caption{Second moment of $T_{\gamma_{th}}$ for the sum of four independent Log-normal RVs with mean $\mu_i=0\text{ dB},i=1,2,3,4$, standard deviations $\sigma_1=\sigma_2=4$ dB, $\sigma_3=\sigma_4=6$ dB, and $\gamma_{th}=25$ dB.}
\label{fig1}
\end{figure}
It is clear from this figure that the idea of considering only the dominating RVs, instead of treating all the components similarly, reduces the second moment $\mathbb{E}_{p^*} \left [T_{\gamma_{th}}^2 \right]$ for all values of $\theta$ and hence decreases the variance of $T_{\gamma_{th}}$. In the following subsection, we will describe a procedure to determine the optimal (in a sense that will be explained later) twisting parameter which ensures the largest amount of variance reduction. Without loss of generality, we assume that the $s$ heaviest i.i.d RVs which will be twisted are $X_1,X_2,...,X_s$. From \eqref{like} and \eqref{eq:twist}, the likelihood ratio of the improved approach is then:
\begin{align}\label{Like_ratio}
L(X_1,X_2,...,X_s)=\frac{1}{\left (1-\theta \right )^s}\exp \left (  -\theta \sum_{i=1}^{s}{\Lambda_1(X_i)}\right ),
\end{align}
where $\Lambda_1(\cdot)$ is the hazard function of the RVs $X_1,X_2,...,X_s$.
\par }
\vspace{-3mm}
\subsection{Determination of the Twisting Parameter}
{\ The determination of the parameter $\theta$ is performed following the steps of the minmax approach derived in \cite{DBLP:journals/corr/RachedBKAT14}. It is important to note that applying the same technique in the present setting is not attractive since it results in a zero twisting parameter. Hence, a slight modification is required. In fact, the second moment of $T_{\gamma_{th}}$ can be decomposed into two terms as follows:
\begin{align}
&\nonumber \mathbb{E}_{p^*} \left [T_{\gamma_{th}}^2 \right ]=\mathbb{E}_{p^*} \left [ L^2(X_1,X_2,...,X_s)\textbf{1}_{(S_s>\gamma_{th})} \right ]\\
&+\mathbb{E}_{p^*} \left [ L^2(X_1,X_2,...,X_s)\textbf{1}_{(S_N>\gamma_{th},S_s<\gamma_{th})} \right ].
\end{align}
Instead of applying the minmax approach to $\mathbb{E}_{p^*} \left [T_{\gamma_{th}}^2 \right ]$ as in \cite{DBLP:journals/corr/RachedBKAT14}, we propose to determine the parameter $\theta$ through considering only the dominant term $\mathbb{E}_{p^*} \left [ L^2(X_1,X_2,...,X_s)\textbf{1}_{(S_s>\gamma_{th})} \right ]$. In the first step, an upper bound of this term is derived through the resolution of the following maximization problem $(P)$
\begin{align}
\underset{X_1,...,X_s}{\max} \hspace{3mm}&L(X_1,X_2,...,X_s)\\
\text{Subject to   } &\sum_{i=1}^{s}{X_i} \geq\gamma_{th}, \text{    }\nonumber X_i>0, \hspace{2mm} i=1,...,s.
\end{align}
Let $X_1^*,X_2^*,...,X_s^*$ be the solution of $(P)$. From (\ref{Like_ratio}), it follows that
\begin{align}
&\nonumber \mathbb{E}_{p^*} \left [ L^2(X_1,X_2,...,X_s)\textbf{1}_{(S_s>\gamma_{th})} \right ]\\
\label{eq2} & \leq\frac{1}{\left (1-\theta \right )^{2s}}\exp\left (-2\theta\sum_{i=1}^{s}{\Lambda_1(X_i^*)} \right )=h\left (\theta \right). 
\end{align}
The second step in the minmax approach is to minimize the previous upper bound with respect to $\theta$. Equivalently, we minimize the function $\log \left (h \left (\theta \right ) \right )$. Equating The first derivative of $\log(f(\theta))$ with respect to $\theta$ to zero yields
\begin{align*}
-2 \sum_{i=1}^{s}{\Lambda_1\left( X_i^*\right)}+\frac{2s}{1-\theta^*}=0.
\end{align*}
This leads to the minmax optimal parameter given by
\begin{align}\label{theta_opti}
\theta^*=1-\frac{s}{\sum_{i=1}^{s}{\Lambda_1(X_i^*)}}. 
\end{align}
Through a simple computation, we prove that the function $\log \left ( h\left ( \theta \right ) \right)  $ is actually convex and hence $\theta^*$ is a minimizer.
%\begin{align}
%\nonumber \mathbb{E}_{p^*} \left [T_{\gamma_{th}}^2 \right ]&=\mathbb{E}_{p^*} \left [ L^2(X_1,X_2,...,X_N)\textbf{1}_{(S_N>\gamma_{th})} \right ]\\
%\label{eq2} & \leq\frac{1}{\left (1-\theta \right )^{2s}}\exp\left (-2\theta\sum_{i=1}^{s}{\Lambda_1(X_i^*)} \right ).
%\end{align}
%The second step in the minmax approach is to minimize the previous upper bound with respect to $\theta$. This leads to the minmax optimal parameter given by
%\begin{align}\label{theta_opti}
%\theta^*=1-\frac{s}{\sum_{i=1}^{s}{\Lambda_1(X_i^*)}}.
%\end{align}  
\par }
\subsection{Asymptotic Optimality}
The asymptotic optimality criterion (\ref{asymp_opt}) of our proposed IS technique is stated in the the following theorem:
\begin{specialcasecounter}
For a sum of independent subexponential variates, the quantity $\alpha$ is asymptotically optimally estimated using the improved hazard rate twisting approach with the twisting parameter given in (\ref{theta_opti}) and provided that $P \left (X_i>\gamma_{th} \right )=o(P \left ( X_1>\gamma_{th}\right )^2)$, $i=s+1,...,N$, as $\gamma_{th} \rightarrow +\infty$..
\end{specialcasecounter}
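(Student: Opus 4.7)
The plan is to verify \eqref{asymp_opt} by establishing the reverse of \eqref{res1} asymptotically: since $\log\alpha\to-\infty$, it is enough to prove $\log\mathbb{E}_{p^*}[T_{\gamma_{th}}^2]\leq(2+o(1))\log\alpha$ as $\gamma_{th}\to+\infty$. I first handle the denominator. The single-big-jump asymptotic for independent subexponential sums gives $P(S_N>\gamma_{th})\sim\sum_{i=1}^N P(X_i>\gamma_{th})$, and since the hypothesis $P(X_i>\gamma_{th})=o(P(X_1>\gamma_{th})^2)$ for $i>s$ implies in particular $P(X_i>\gamma_{th})=o(P(X_1>\gamma_{th}))$, one obtains $\alpha\sim s\,P(X_1>\gamma_{th})=s\,e^{-\Lambda_1(\gamma_{th})}$, hence $\log\alpha\sim-\Lambda_1(\gamma_{th})$.

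Next, I solve the maximization $(P)$ and evaluate $h(\theta^*)$. Subexponentiality forces $\lambda_1$ to be non-increasing for large arguments, so $\Lambda_1$ is concave; combined with $\Lambda_1(0)=0$, this yields the subadditivity $\Lambda_1(x+y)\leq\Lambda_1(x)+\Lambda_1(y)$. Maximizing $L$ reduces to minimizing $\sum_{i=1}^s\Lambda_1(X_i)$ under $\sum_i X_i\geq\gamma_{th}$, and iterating subadditivity shows that the infimum equals $\Lambda_1(\gamma_{th})$ and is realized at a vertex by concentrating all mass on one coordinate, i.e.\ $\sum_{i=1}^s\Lambda_1(X_i^*)=\Lambda_1(\gamma_{th})$. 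Substituting into \eqref{theta_opti} gives $\theta^*=1-s/\Lambda_1(\gamma_{th})$ and
\begin{equation*}
h(\theta^*)=\left(\tfrac{\Lambda_1(\gamma_{th})}{s}\right)^{2s}\exp\bigl(2s-2\Lambda_1(\gamma_{th})\bigr),
\end{equation*}
so $\log h(\theta^*)=-2\Lambda_1(\gamma_{th})+O\bigl(\log\Lambda_1(\gamma_{th})\bigr)$, which matches $2\log\alpha$ to leading order.

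The key remaining step is to show that the residual $B:=\mathbb{E}_{p^*}[L^2\mathbf{1}_{S_s<\gamma_{th},\,S_N>\gamma_{th}}]$ is also of order $\exp(-2\Lambda_1(\gamma_{th})+o(\Lambda_1(\gamma_{th})))$. Using the change of measure $dp=L\,dp^*$ (the twisting affects only $X_1,\ldots,X_s$), one rewrites $B=\mathbb{E}_p[L\,\mathbf{1}_{S_s<\gamma_{th},\,S_N>\gamma_{th}}]$ and splits the event with a small parameter $\delta>0$. On $\{S_s\geq(1-\delta)\gamma_{th}\}$ subadditivity gives $\sum\Lambda_1(X_i)\geq\Lambda_1((1-\delta)\gamma_{th})$ and thus $L\leq(1-\theta^*)^{-s}\exp(-\theta^*\Lambda_1((1-\delta)\gamma_{th}))$, while the marginal probability satisfies $P_p(S_s\geq(1-\delta)\gamma_{th})\sim s\,e^{-\Lambda_1((1-\delta)\gamma_{th})}$ by the subexponential asymptotic; on the complementary sub-event $\sum_{i>s}X_i>\delta\gamma_{th}$, so Kesten-type bounds together with the tail-dominance hypothesis, extended from $\gamma_{th}$ to $\delta\gamma_{th}$ via the slow variation of $\Lambda_1$ (Log-normal) or its homogeneity (Weibull), give probability $o(e^{-2\Lambda_1(\gamma_{th})})$, which is multiplied by the trivial bound $L\leq(1-\theta^*)^{-s}$. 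Letting $\delta\to 0$ slowly, $\Lambda_1((1-\delta)\gamma_{th})\sim\Lambda_1(\gamma_{th})$ in both pieces, giving $\log B\leq -2\Lambda_1(\gamma_{th})+o(\Lambda_1(\gamma_{th}))$; combined with the bound on $h(\theta^*)$, dividing by $\log\alpha\sim-\Lambda_1(\gamma_{th})$ yields the limit $2$.

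The main obstacle is precisely this residual estimate: under the twisted measure $p^*$ the event $\{S_N>\gamma_{th},S_s<\gamma_{th}\}$ is not rare, so a naive probability bound is too weak and the uniform bound $L\leq(1-\theta^*)^{-s}$ only grows polynomially in $\Lambda_1(\gamma_{th})$. One must instead exploit the tail-dominance assumption $P(X_i>\gamma_{th})=o(P(X_1>\gamma_{th})^2)$ at rescaled thresholds, which requires a regularity property of $\Lambda_1$ (slow variation or power-law homogeneity) that holds for the Log-normal and Weibull families motivating the paper. The second delicate point, handled by the subadditivity-based refinement, is the region where $S_s$ is close to $\gamma_{th}$: there the uniform bound on $L$ alone gives only $\log\alpha$ rather than $2\log\alpha$, and the extra factor $e^{-\theta^*\Lambda_1((1-\delta)\gamma_{th})}$ is essential to recover the correct exponent.
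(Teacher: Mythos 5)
Your proposal is essentially correct but follows a genuinely different route from the paper. You attack the decomposition $\mathbb{E}_{p^*}[T_{\gamma_{th}}^2]=\mathbb{E}_{p^*}[L^2\mathbf{1}_{(S_s>\gamma_{th})}]+\mathbb{E}_{p^*}[L^2\mathbf{1}_{(S_N>\gamma_{th},S_s<\gamma_{th})}]$ head-on: the first term via the minmax bound $h(\theta^*)$, the residual via the change of measure back to the original law, a $\delta$-split of the event, subadditivity of $\Lambda_1$, and the single-big-jump asymptotic $\alpha\sim s\,P(X_1>\gamma_{th})$. The paper never splits the event at all. Instead it rewrites each untwisted density as $f_i(x)=\frac{1}{1-\theta}g_i(x)\exp(-\theta\Lambda_i(x))$, so that the \emph{entire} second moment becomes an integral of $\frac{1}{(1-\theta)^{N+s}}\exp\left(-\theta\bigl(2\sum_{i=1}^s\Lambda_1(x_i)+\sum_{i=s+1}^N\Lambda_i(x_i)\bigr)\right)$ against a product of twisted densities over $\{S_N>\gamma_{th}\}$, as in (\ref{secon_mom}); bounding the exponent by the value $A'(\gamma_{th})$ of the single optimization problem (P$'$) over all $N$ variables, and invoking the cited asymptotics $A'(\gamma_{th})\sim2\Lambda_1(\gamma_{th})$ (which is exactly where the hypothesis $P(X_i>\gamma_{th})=o(P(X_1>\gamma_{th})^2)$ enters, via $2\Lambda_1(\gamma_{th})-\Lambda_i(\gamma_{th})\to-\infty$), yields the whole bound in one stroke. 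For the denominator the paper only needs the trivial inclusion $\{X_1>\gamma_{th}\}\subset\{S_N>\gamma_{th}\}$, i.e.\ (\ref{alpha_bn}), not the full subexponential tail equivalence. The paper's route is therefore shorter and assumption-lighter; yours is more informative (it identifies the exact logarithmic order of $\alpha$ and of each piece of the second moment, and makes explicit \emph{why} the residual is harmless).

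One caveat you should be aware of: your residual estimate uses the tail-dominance hypothesis at the \emph{rescaled} threshold $\delta\gamma_{th}/(N-s)$, and the first sub-piece needs $\Lambda_1((1-\delta)\gamma_{th})\sim\Lambda_1(\gamma_{th})$; both require regularity of the hazard functions (slow variation or power-law homogeneity) that is not part of the theorem's stated hypotheses, together with a compatible choice of $\delta=\delta(\gamma_{th})\to0$. This is fine for the Log-normal and Weibull families the paper targets, but it makes your argument strictly less general than the paper's, which extracts the factor $\exp(-\theta\Lambda_i)$ at the level of the densities and thereby only ever evaluates the hypothesis at the original threshold. Also, the infimum in (P) equals $\Lambda_1(\gamma_{th})$ but need not be attained on the open orthant, so your identity $\sum_{i=1}^s\Lambda_1(X_i^*)=\Lambda_1(\gamma_{th})$ should be read as the asymptotic statement (\ref{beh1}); this does not affect the conclusion.
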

\begin{proof}
{\ The second moment of the RV $T_{\gamma_{th}}$ could be written using (\ref{Like_ratio}) as follows
\begin{align}
\nonumber &\mathbb{E}_{p^*} \left [ T_{\gamma_{th}}^2\right ]\\
\nonumber &=\int_{S_N >\gamma_{th}}{L^2(x_1,x_2,...,x_s) \prod_{i=s+1}^{N}{f_i(x_i)}\prod_{i=1}^{s}{g_1(x_i)}dx_1dx_2...dx_N}\\
\nonumber &=\int_{S_N >\gamma_{th}}{\frac{1}{(1-\theta)^{N+s}} \exp \left ( -\theta (2 \sum_{i=1}^{s}{\Lambda_1(x_i)} +\sum_{i=s+1}^{N}{\Lambda_i(x_i)})\right )}.\\
\label{secon_mom}&\prod_{i=s+1}^{N}{g_i(x_i)}\prod_{i=1}^{s}{g_1(x_i)}dx_1dx_2...dx_N
\end{align}
Let us now consider the following minimization problem (P')
\begin{align}
\underset{X_1,...,X_N}{\min} \hspace{3mm}& 2 \sum_{i=1}^{s}{\Lambda_1(X_i)} +\sum_{i=s+1}^{N}{\Lambda_i(X_i)}\\
\text{Subject to   } &\sum_{i=1}^{N}{X_i} \geq\gamma_{th}, \text{    }\nonumber X_i>0, \hspace{2mm} i=1,...,N.
\end{align}
Let us denote by $X_1^{*'},X_2^{*'},...,X_N^{*'}$ the solution of (P') and $A'(\gamma_{th})=2 \sum_{i=1}^{s}{\Lambda_1(X_i^{*'})} +\sum_{i=s+1}^{N}{\Lambda_i(X_i^{*'})}$. From (\ref{secon_mom}), we have
\begin{align}
\mathbb{E}_{p^*} \left [ T_{\gamma_{th}}^2\right ] \leq \frac{1}{(1-\theta)^{N+s}}\exp \left (-\theta A'(\gamma_{th}) \right ).
\end{align}
By replacing $\theta$ by $\theta^*$ in (\ref{theta_opti}), it follows that
\begin{align}
\mathbb{E}_{p^*} \left [ T_{\gamma_{th}}^2\right ] \leq \left ( \frac{A(\gamma_{th})}{s}\right )^{N+s} \exp \left ( -A'(\gamma_{th})+s\frac{A'(\gamma_{th})}{A(\gamma_{th})}\right ),
\end{align}
where $A(\gamma_{th})=\sum_{i=1}^{s}{\Lambda_1(X_i^*)}$. By applying the Logarithm function on both sides, we get
\begin{align}\label{secon_log}
\log \left ( \mathbb{E}_{p^*} \left [ T_{\gamma_{th}}^2\right ]\right ) \leq  \left ( N+s\right ) \log ( \frac{A(\gamma_{th})}{s})-A'(\gamma_{th})+s\frac{A'(\gamma_{th})}{A(\gamma_{th})}
\end{align}
In the other hand, since $\{ X_1 >\gamma_{th}\} \subset \{ S_N>\gamma_{th}\}$ (this follows from the positivity of $X_1,X_2,...,X_N$), we get by applying the Logarithm function that
\begin{align}\label{alpha_bn}
\log \left (P(X_1>\gamma_{th}) \right )=-\Lambda_1(\gamma_{th}) \leq \log \left ( \alpha \right ).
\end{align}
The last step of the proof is to investigate the asymptotic behavior of the optimization problems (P) and (P'). Under a concavity assumption (which is satisfied by all commonly used subexponential distributions such as the Log-normal RV and the Weibull RV with shape parameter less than 1), an equivalent optimization problem was studied in details in \cite{DBLP:journals/corr/RachedBKAT14}. Applying the results of \cite{DBLP:journals/corr/RachedBKAT14} to (P) yields
\begin{align}\label{beh1}
A(\gamma_{th}) \sim \Lambda_1(\gamma_{th}), \text{  as  }\gamma_{th} \rightarrow +\infty.
\end{align}
Moreover, the assumption $P \left (X_i>\gamma_{th} \right )=o(P \left ( X_1>\gamma_{th}\right )^2)$, $i=s+1,...,N$, as $\gamma_{th} \rightarrow +\infty$ implies that $2\Lambda_1(\gamma_{th}) -\Lambda_i(\gamma_{th}) \rightarrow -\infty$, $i=s+1,...,N$, as $\gamma_{th} \rightarrow +\infty$. Thus, the results in \cite{DBLP:journals/corr/RachedBKAT14} applied to (P') results in
\begin{align}\label{beh2}
A'(\gamma_{th}) \sim 2\Lambda_1(\gamma_{th}), \text{  as  } \gamma_{th} \rightarrow +\infty.
\end{align}
From (\ref{beh1}), (\ref{beh2}) and the fact that $\Lambda_1(\gamma_{th}) \rightarrow +\infty$  as $\gamma_{th} \rightarrow +\infty$, we deduce that the left hand side of (\ref{secon_log}) is negative for a sufficiently large $\gamma_{th}$. Hence, since $\log \left ( \alpha\right )<0$, it follows that
\begin{align}
\frac{\log \left ( \mathbb{E}_{p^*} \left [ T_{\gamma_{th}}^2\right ]\right )}{\log \left ( \alpha\right )} \geq \frac{\left ( N+s\right ) \log ( \frac{A(\gamma_{th})}{s})-A'(\gamma_{th})+s\frac{A'(\gamma_{th})}{A(\gamma_{th})}}{-\Lambda_1(\gamma_{th})}.
\end{align} 
Finally, using (\ref{beh1}) and (\ref{beh2}), we deduce that
\begin{align}
\lim_{\gamma_{th}\rightarrow +\infty}{\frac{\log \left ( \mathbb{E}_{p^*} \left [ T_{\gamma_{th}}^2\right ]\right )}{\log \left ( \alpha\right )}} \geq 2.
\end{align}
Using (\ref{res1}), the asymptotic optimality criterion (\ref{asymp_opt}) is satisfied and the proof is concluded.
\par }
\end{proof}
\begin{rem}
{\ The assumption that $P \left (X_i>\gamma_{th} \right )=o(P \left ( X_1>\gamma_{th}\right )^2)$, $i=s+1,...,N$, as $\gamma_{th} \rightarrow +\infty$ does not introduce in most of the cases a strong limitation. For instance, this assumption holds for the Weibull distribution when the source of heaviness is due to the shape parameter , that is $k_1< k_i$, $i=s+1,...,N$. Moreover, it is also satisfied in the Log-normal setting provided that $\sigma_1> \sqrt{2}\sigma_i$, $i=s+1,...,N$.
\par}
\end{rem}
\begin{rem}
{\ Through extensive simulation results, we believe that Theorem 1 holds also when the assumption $P \left (X_i>\gamma_{th} \right )=o(P \left ( X_1>\gamma_{th}\right )^2)$, $i=s+1,...,N$, is not satisfied.
\par}
\end{rem}
\section{Simulation Results and Concluding Remarks}
{\ In this section, we will present some selected simulation results to show the computational gain achieved by our new proposed approach compared to the naive Monte Carlo simulation technique and the conventional hazard rate-based technique \cite{DBLP:journals/corr/RachedBKAT14} where all the components in the summation are twisted. In Fig. \ref{fig2}, our objective is to quantify the amount of variance reduction reached by our proposed IS technique. 
\begin{figure}[h]
\centering
\scalefont{0.6}
\begin{tikzpicture}[scale=0.8]
\usetikzlibrary{arrows}
\tikzset{>=latex}
\node[draw] at (2.5,1.77) {$N=2$};
\node[draw] at (4.86,0.62) {$N=4$};
\begin{semilogyaxis}[%
%width=\figurewidth,
%height=\figureheight,
%scale only axis,
xmin=20, xmax=32,
xlabel={$\gamma{}_{\text{th}}\text{(dB)}$},
xmajorgrids,
ymin=1e-22, ymax=0.01,
yminorticks=true,
ymajorgrids,
yminorgrids,
grid style={dotted},
ylabel={Second Moment},
legend style={at={(0.988703125,0.9879805231856387)},draw=black,fill=white,align=left}]
\addplot [
color=blue,
solid,
mark=square,
mark options={solid}
]
coordinates{
 (20,8.83224792649946e-05)(21,3.03134284563736e-05)(22,9.26820055726201e-06)(23,2.49716015775874e-06)(24,5.92205760348811e-07)(25,1.19555009352404e-07)(26,2.03155592621197e-08)(27,2.88164246857723e-09)(28,3.35881220080318e-10)(29,3.11552282754936e-11)(30,2.2713573534653e-12)(31,1.27467905173407e-13)(32,5.26170929247616e-15) 
};
\addlegendentry{Conventional Approach};

\addplot [
color=black,
solid,
mark=triangle,
mark options={solid}
]
coordinates{
 (20,3.25620707829999e-05)(21,1.03931615434774e-05)(22,2.96352739603601e-06)(23,7.4360607187343e-07)(24,1.62203453023941e-07)(25,3.03275215576415e-08)(26,4.79048223463138e-09)(27,6.26449639495716e-10)(28,6.72227328117003e-11)(29,5.76772430680496e-12)(30,3.87179587553059e-13)(31,1.98047921413902e-14)(32,7.58364273373833e-16)
};
\addlegendentry{Improved Approach};

\addplot [
color=blue,
solid,
mark=square,
mark options={solid},
forget plot
]
coordinates{
 (20,0.000349239341130087)(21,0.000126304037127353)(22,4.13978384409418e-05)(23,1.1963377837103e-05)(24,3.10676858027193e-06)(25,6.88186726902928e-07)(26,1.31963568994924e-07)(27,2.13010346266433e-08)(28,2.8984721053726e-09)(29,3.05260822819919e-10)(30,2.46613106185547e-11)(31,1.53885777775148e-12)(32,7.85215424090042e-14)
};
\addplot [
color=black,
solid,
mark=triangle,
mark options={solid},
forget plot
]
coordinates{
 (20,3.6308140609007e-05)(21,1.14856199469778e-05)(22,3.2351568186305e-06)(23,8.00540439782174e-07)(24,1.73261170262844e-07)(25,3.21156006715677e-08)(26,5.02796282718925e-09)(27,6.57269694798367e-10)(28,6.98274262221075e-11)(29,5.93667928206716e-12)(30,3.98721512686826e-13)(31,2.03315537048712e-14)(32,7.75642978175343e-16)
};
\draw[thick,->](axis cs: 24.5,1e-15)--(axis cs: 26.5,0.9e-8);
\draw[thick,->](axis cs: 24.5,1e-15)--(axis cs: 25.5,1.09e-8);
\draw[thick,->](axis cs: 28.5,1e-19)--(axis cs: 30.5,1e-13);
\draw[thick,->](axis cs: 28.5,1e-19)--(axis cs: 29.5,1.0e-10);
\end{semilogyaxis}
\end{tikzpicture}
\vspace{-3mm}
\caption{Second moment of $T_{\gamma_{th}}$ for the sum of $N$ independent Weibull RVs. First case $N=2$: the scale parameters are $\beta_i=1,i=1,2.$ and the shape parameters are $k_1=0.4$ and $k_2=0.8$. Second case $N=4$: the scale parameters are $\beta_i=1,i=1,2,3,4$, and the shape parameters are $k_1=0.4$ and $k_2=k_3=k_4=0.8$.}
\label{fig2}
\end{figure}
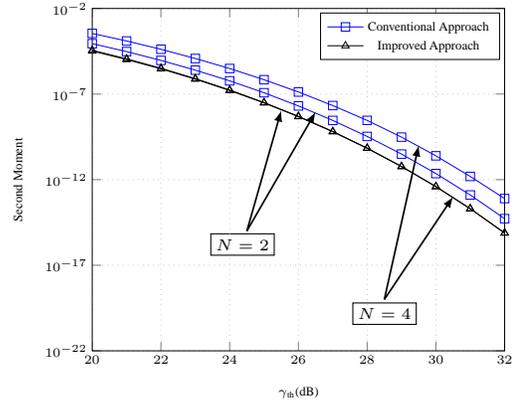
For this purpose, we plot in this figure the second moment of $T_{\gamma_{th}}$ given by both the improved and the conventional techniques as function of the threshold. The minmax optimal parameter (\ref{theta_opti}) is used for the improved IS technique while for the conventional IS technique the corresponding minmax parameter is given in \cite{DBLP:journals/corr/RachedBKAT14}. We deduce from Fig. \ref{fig2} that, for both cases $N=2$ and $N=4$, the variance is  reduced for all the range of considered thresholds.
%\begin{figure}[h]
%\centering
%\setlength\figureheight{0.33\textwidth}
%\setlength\figurewidth{0.40\textwidth}
%\input{efficiency.tikz}
%%\input{second_mom_gamma_4.tikz}
%\vspace{-2mm}
%\caption{Efficiency for the sum of $N$ independent Weibull RVs. First case $N=2$: the scale parameters are $\beta_i=1,i=1,2.$ and the shape parameters are $k_1=0.4$ and $k_2=0.8$. Second case $N=4$: the scale parameters are $\beta_i=1,i=1,2,3,4$ and the shape parameters are $k_1=0.4$ and $k_2=k_3=k_4=0.8$.}
%\label{fig3}
%\end{figure}
%Moreover, an interesting result is easily observed where the amount by which the variance is reduced for the case $N=4$ is bigger than the one for $N=2$. 
Moreover, it is worthy to point out that the  variance reduction of the improved method is much more important when $N=4$ than when $N=2$.  
This result is expected since in the case of $N=4$, the two additional components are lighter than the dominating RV and hence they do not affect considerably the right-tail of the sum distribution. Consequently, twisting these two RVs by the conventional IS approach will obviously worsen the second moment of $T_{\gamma_{th}}$. We also deduce from Fig. \ref{fig2} that for $N=4$, the second moment obtained by the improved IS approach remains approximately the same as for $N=2$. The argument is that the variance given by the improved IS technique depends strongly on the number of dominating RVs. Hence, adding the two non-dominating RVs, which are not twisted, to the sum does not affect considerably the variance of the proposed IS method. 
\par}
{\ In a second step, we evaluate the computational gain of the proposed method. For that, we define, for a fixed accuracy requirement,  the efficiency measure $\xi_1$ between the improved IS technique and the naive MC simulation approach as:
\begin{align}
\xi_1=\frac{M_{MC}}{M_I}=\frac{\alpha \left (1-\alpha \right )}{\mathrm{var}_{I} \left [ T_{\gamma_{th}}\right ]} 
\end{align}
Similarly, we define the efficiency $\xi_2$ between the conventional IS technique and the naive MC simulation technique as: 
\begin{align}
\xi_2=\frac{M_{MC}}{M_C}=\frac{\alpha \left (1-\alpha \right )}{\mathrm{var}_{C} \left [ T_{\gamma_{th}}\right ]} 
\end{align}
where $M_{MC}$, $M_I$, $M_C$, are the number of simulation runs for respectively, the naive MC simulation technique, the improved IS and the conventional  IS techniques, whereas $\alpha \left (1-\alpha \right )$, $\mathrm{var}_{I} \left [ T_{\gamma_{th}}\right ]$ and $\mathrm{var}_{C} \left [ T_{\gamma_{th}}\right ]$ refer to their corresponding variances. 
 %$\mathrm{var}_{I} \left [ T_{\gamma_{th}}\right ]$ and $\mathrm{var}_{I} \left [ T_{\gamma_{th}}\right ]$ are the number of simulation runs and the variances of $T_{\gamma_{th}}$ of the naive MC simulation, the improved IS and the conventional IS techniques respectively. 
The efficiency $\xi_1$ (respectively $\xi_2$) measures the gain achieved by the improved IS technique (respectively the conventional IS technique) over the naive MC simulation technique in terms of necessary number of simulation runs to meet a fixed accuracy requirement. 
%factor by which the improved IS technique (respectively the conventional IS technique) reduces the number of simulation runs compared to the naive MC simulation, for a fixed accuracy requirement.
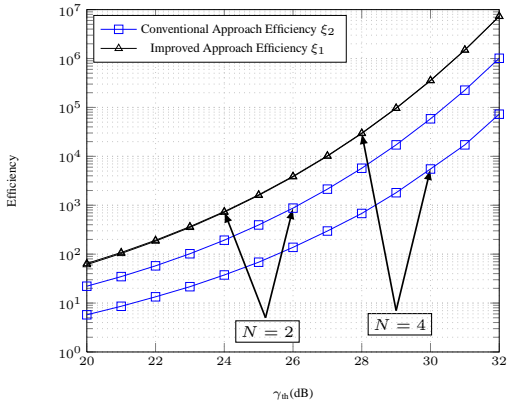
\begin{figure}[h]
\centering
\scalefont{0.6}
\begin{tikzpicture}[scale=0.8]
\usetikzlibrary{arrows}
\tikzset{>=latex}
\node[draw] at (3.0,0.34) {$N=2$};
\node[draw] at (5.2,0.46) {$N=4$};
\begin{semilogyaxis}[%
%width=\figurewidth,
%height=\figureheight,
%scale only axis,
xmin=20, xmax=32,
xlabel={$\gamma{}_{\text{th}}\text{(dB)}$},
xmajorgrids,
ymin=1, ymax=10000000,
yminorticks=true,
ylabel={Efficiency},
ymajorgrids,
yminorgrids,
grid style={dotted},
legend style={at={(0.015872395833335,0.845073687498015)},anchor=south west,draw=black,fill=white,align=left}]
\addplot [
color=blue,
solid,
mark=square,
mark options={solid}
]
coordinates{
 (20,22.1123355019221)(21,34.7315673119051)(22,57.6869585587676)(23,101.916461765326)(24,193.046180639532)(25,395.286561103208)(26,874.006968977041)(27,2135.17922496543)(28,5709.11346365603)(29,17155.6046160323)(30,58529.8688705439)(31,225566.91803441)(32,1011658.45992802) 
};
\addlegendentry{Conventional Approach Efficiency $\xi_2$};

\addplot [
color=black,
solid,
mark=triangle,
mark options={solid}
]
coordinates{
 (20,64.3812155161147)(21,108.105392504099)(22,192.292624697582)(23,365.000291520003)(24,742.507495371157)(25,1632.90507532687)(26,3910.36695109634)(27,10274.9842681645)(28,29861.2566611224)(29,97139.0404606149)(30,357427.782111619)(31,1501675.63282193)(32,7323227.18732722) 
};
\addlegendentry{Improved Approach Efficiency $\xi_1$};

\addplot [
color=blue,
solid,
mark=square,
mark options={solid},
forget plot
]
coordinates{
 (20,5.75415408925354)(21,8.5946029728422)(22,13.4130187226186)(23,21.6060769836368)(24,37.5184959315276)(25,68.4357972983852)(26,137.756478662379)(27,297.026424335674)(28,679.451032416149)(29,1803.54365001582)(30,5509.73136040619)(31,17163.8160805742)(32,72732.8526442533) 
};
\addplot [
color=black,
solid,
mark=triangle,
mark options={solid},
forget plot
]
coordinates{
 (20,61.0726585591887)(21,103.242424525924)(22,184.668839613856)(23,352.242801344746)(24,719.844422623456)(25,1590.20400219048)(26,3817.33696415979)(27,10064.0172073156)(28,29300.8268896257)(29,95630.414286776)(30,352347.610390586)(31,1486048.18115505)(32,7249627.69877635) 
};
\draw[thick,->](axis cs: 25.2,5)--(axis cs: 24,742.507495371157);
\draw[thick,->](axis cs: 25.2,5)--(axis cs: 26,874.006968977041);
\draw[thick,->](axis cs: 29,7)--(axis cs: 28,29861.2566611224);
\draw[thick,->](axis cs: 29,7)--(axis cs: 30,5509.73136040619);
\end{semilogyaxis}
\end{tikzpicture}%
\vspace{-3mm}
\caption{Efficiency measures $\xi_1$ and $\xi_2$ for the sum of $N$ independent Weibull RVs. First case $N=2$: the scale parameters are $\beta_i=1,i=1,2.$ and the shape parameters are $k_1=0.4$ and $k_2=0.8$. Second case $N=4$: the scale parameters are $\beta_i=1,i=1,2,3,4$, and the shape parameters are $k_1=0.4$ and $k_2=k_3=k_4=0.8$.}
\label{fig3}
\end{figure}
In Fig. \ref{fig3}, the two efficiency measures $\xi_1$ and $\xi_2$ are plotted as function of the threshold for both cases $N=2$ and $N=4$. We note from this figure that in both cases $\xi_1$ and $\xi_2$ are much more bigger than $1$ and hence the improved and the conventional IS techniques are more efficient than the naive MC simulation technique. Moreover, it is important to point out that the larger is $\gamma_{th}$, the more efficient are the two IS techniques compared to the naive MC simulation. For instance, for a fixed requirement, the naive MC simulation needs approximately $M_{MC}=10^5\times M_C=10^7\times M_I$ simulations runs when $\gamma_{th}=32$ dB and $N=4$, (see Fig. 3). Furthermore, the most interesting result is that our improved IS technique is more efficient than the conventional one for both scenarios $N=2$ and $N=4$. For instance, for $N=4$ and $\gamma_{th}$=32 dB, the conventional IS approach requires approximately $M_C=\frac{\xi_1}{\xi_2}\times  M_I=100\times M_I$ simulation runs to achieve the same variance (accuracy) given by our improved IS technique. In addition, the efficiency of our improved approach compared to the conventional one $\frac{\xi_1}{\xi_2}$ is clearly higher in the case $N=4$ than the one obtained when $N=2$. This is actually expected from the analysis of the variance in Fig. 2.
\par}
\vspace{-3mm}
\bibliography{References}
\bibliographystyle{IEEEtran}
\end{document}